\documentclass[12pt]{article}

\usepackage{amsmath,amssymb,amsthm}
\usepackage[margin=1in]{geometry}
\usepackage[numbers,super,sort&compress]{natbib}
\usepackage{subcaption}
\usepackage{graphicx}
\usepackage{booktabs}   
\usepackage{chngcntr}
\usepackage{setspace}
\usepackage{hyperref}
\usepackage{authblk}
\usepackage[export]{adjustbox}
\usepackage[percent]{overpic}
\usepackage{comment}

\newtheorem{prop}{Proposition}

\newtheorem{definition}{Definition}

\title{Bounding Causal Effects for Ordinal Outcomes Under Positive Dependence}

\author{Micha Mandel\thanks{micha.mandel@mail.huji.ac.il} }
\author{Daniel Rodan}

\affil{Department of Statistics and Data Science, The Hebrew University of Jerusalem,
Mount Scopus, Jerusalem 9190501, Israel}

\begin{document}

\date{\textbf{Short title:} Causal Effects for Ordinal Outcomes}

\maketitle
\thispagestyle{empty}

\newpage
\thispagestyle{empty}

\begin{abstract}

\setstretch{1.2}

Defining and estimating causal effects for ordinal data is challenging. Standard average treatment effects are not appropriate for ordinal scales, and alternative estimands, such as the probabilities that the treatment outcome exceeds or does not worsen the control outcome, are generally not identifiable. Existing work provides sharp bounds for these quantities based only on marginal distributions. Motivated by a previous observation showing that bounds obtained under an independence working assumption can be substantially tighter, we investigate conditions under which such bounds are valid. We show that commonly used notions of positive dependence, including positive quadrant dependence and positive regression dependence, are not sufficient to justify these bounds. We then propose a new dependence condition, diagonal tail dominance (DTD), under which the independence-based bounds are guaranteed to hold. We explain why this condition is quite strong and may not be appropriate in many settings, limiting the justification for using the independence-based bounds. However, local DTD may be plausible in many applications, and we derive improved bounds that exploit an independence working assumption on selected parts of the probability table. Through theoretical results, numerical examples, and an analysis of data from a clinical trial of a new treatment for acute ischemic stroke, we illustrate the properties of the bounds and the role of the proposed conditions.

\vspace{1cm}
\noindent\textbf{Keywords:} bounds, causal inference, dependence measures, partial identification, positive regression dependence
\end{abstract}

\clearpage
\setcounter{page}{1}

\setstretch{1.2}

\section{Introduction}

Ordinal outcomes arise frequently in educational, behavioral, and public health research, for example in responses to Likert-type questionnaire items (e.g., strongly disagree to strongly agree) or in clinical classifications such as cancer stage I--IV. In such settings, the categories possess a clear ordering, yet the numerical distances between adjacent categories are not meaningfully defined. This distinguishes ordinal variables from nominal outcomes, for which ordering is irrelevant, as well as from interval or ratio measurements, for which differences and averages carry substantive meaning \citep{agresti2010analysis}.

The ordered but non-metric nature of ordinal outcomes poses challenges for both description and inference. Although ordinal variables encode more information than nominal categories, treating them as quantitative responses by assigning numerical scores may lead to conclusions that depend on arbitrary scaling choices and lack clear interpretation \citep{rebchuk2020health}. Consequently, standard causal parameters based on averages and differences are generally inappropriate for ordinal data, and alternative estimands are required.

We consider the two-sample setting in which an ordinal outcome is compared between a treatment group ($1$) and a control group ($0$). Using the potential outcomes framework \citep{rubin1974estimating}, let $Y(0)$ and $Y(1)$ denote the potential outcomes for a randomly selected individual in the study population: $Y(0)$ is the outcome if the individual receives the control, and $Y(1)$ is the outcome if she receives the treatment. We assume that the pair $(Y(0),Y(1))$ has a joint distribution $P$ with marginal distributions $P_0$ and $P_1$.

The standard average treatment effect is $E\{Y(1)-Y(0)\}$, but, as noted above, this estimand is not well defined for ordinal outcomes. Several alternative measures have been proposed for ordinal outcomes, including probabilities comparing the treatment and control potential outcomes and measures based on stochastic dominance. These measures compare the marginal distributions $P_1$ and $P_0$ or probabilities derived from the joint distribution $P$, such as $P\{Y(1)>Y(0)\}$. Because the joint distribution of $(Y(0),Y(1))$ is not observable, estimands that depend on $P$ are generally not identifiable, and bounds must instead be derived. Such bounds have been derived previously \citep{huang2017inequality,chiba2017sharp,lu2018treatment,cheng2009estimation,lu2020sharp}, but the interval between the lower and upper bounds, within which the true estimand lies, is often very wide. Lu et al.\cite{lu2018treatment}~suggest using independence-based lower bounds for  $P\{Y(1)>Y(0)\}$ and  $P\{Y(1)\ge Y(0)\}$ under the assumption of positively correlated potential outcomes. While these bounds can substantially improve upon the model-free bounds, the precise positivity assumption required for their validity is not discussed. 

We start by clarifying geometrically the structure of the problem, and visually shows which entries of $P$ are captured by the bounds and which are excluded. We then revisit the independence-based bounds proposed for positively dependent potential outcomes \citep{lu2018treatment} and show that even strong notions of dependence are not sufficient to guarantee the validity of these bounds. We identify a new property of the joint distribution $P$ under which these bounds are valid, and a weaker property under which we develop new tighter lower bounds. Thus, the paper contributes both theoretical insight and practical tools for causal inference with ordinal outcomes.

We focus on bounds based on marginal probabilities and on theoretical properties of $P$, without considering important issues such as confounding, compliance, and inference that have been extensively studied \citep{cheng2009estimation}. We assume that the marginal distributions are identifiable and can be well estimated from the data.

The paper is organized as follows. Section \ref{sec:estimands} discusses different estimands for causal inference with ordinal data that have been proposed in the literature. Section \ref{sec:bounds} presents model-free bounds and illustrates their validity using colored matrices. Section \ref{sec:DTD} presents the tighter bounds of Lu et al.\cite{lu2018treatment}~for positively dependent potential outcomes. It shows, through numerical examples, that standard dependence notions do not guarantee the validity of these bounds and introduces a new alternative sufficient dependence condition. It also derives improved lower bounds under a more plausible dependence assumption. Section \ref{sec:real_data} illustrates the bounds using real data from a clinical trial of a new treatment for acute ischemic stroke. Section \ref{sec:discussion} concludes the paper with a discussion of extensions to continuous data and comments on inference.

\section{Model-free causal estimands for ordinal outcomes} \label{sec:estimands}

Assume that the ordinal outcome takes values in $\{1,\ldots,J\}$, where a higher value indicates a better outcome. A common approach to define a causal estimand is to assign scores or utilities $s_1<\cdots<s_J$ to the ordinal levels and to calculate the standard average treatment effect as
\begin{equation} \label{eq:sATE}
E\{S(1)\}-E\{S(0)\},
\end{equation}
where $(S(0),S(1))$ are the potential scores corresponding to $(Y(0),Y(1)),$ and $E$ is the expectation operator taken over the relevant population \citep{benkeser2021improving}. Assigning numerical scores to the categories amounts to an attempt to transform the ordinal outcome into an interval-scale variable, by imposing meaningful distances between adjacent outcome levels. Examples of studies using scores are Strich et al.\cite{strich2022fostamatinib}~who study a new treatment for hospitalized Covid-19 patients and use the mean difference in a simple ranking of an ordinal severity outcome, and Ratcliff et al.\cite{ratcliff2023early}~who study a new invasive surgery approach for intracerebral hemorrhage using as estimand the mean difference of a utility function applied to the ordinal modified Rankin score.

Replacing the ordinal levels with scores is often arbitrary and can complicate interpretation. The resulting estimand depends on the chosen scoring system, and the resulting treatment effect may appear substantial under one set of scores but small, or even absent, under another. This concern is illustrated by Rebchuk et al.\cite{rebchuk2020health}~who found substantial variability in utility weights for the modified Rankin Scale.

An opposite approach is binarization that collapses the ordinal outcome into two groups defined by a selected threshold. A vector estimand is the distributional causal effect vector \citep{ju2010criteria}, whose $j$'th component is
\[
\Delta_j = P\{Y(1)\ge j\}-P\{Y(0)\ge j\},
\]
${j=1,\ldots,J}$. It measures the difference in the cumulative probabilities of achieving level $j$ or higher under treatment and control. Figure \ref{fig:delta} shows the joint distribution of $(Y(0),Y(1))$, with the events corresponding to $P\{Y(1)\ge j\}$ (left) and $P\{Y(0)\ge j\}$ (middle) highlighted in green. The estimand $\Delta_j$ is the difference between these two quantities and is shown in the right panel for $j=3$, where green cells indicate probabilities to be summed and orange cells indicate probabilities to be subtracted. As seen in the figure, it focuses on part of the support points and therefore loses much of the information in an ordinal outcome, especially if it has many levels.

\begin{figure}[tbp]
\centering
\begin{minipage}{0.333\textwidth}
\centering
\includegraphics[width=\linewidth]{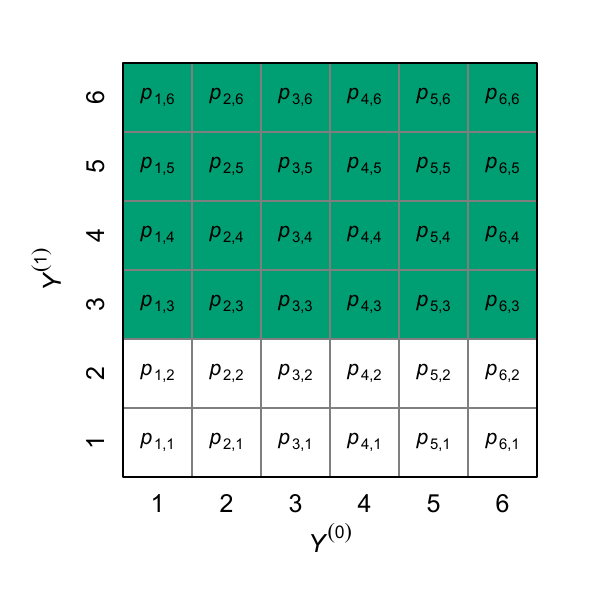}
\end{minipage}%
\begin{minipage}{0.333\textwidth}
\centering
\includegraphics[width=\linewidth]{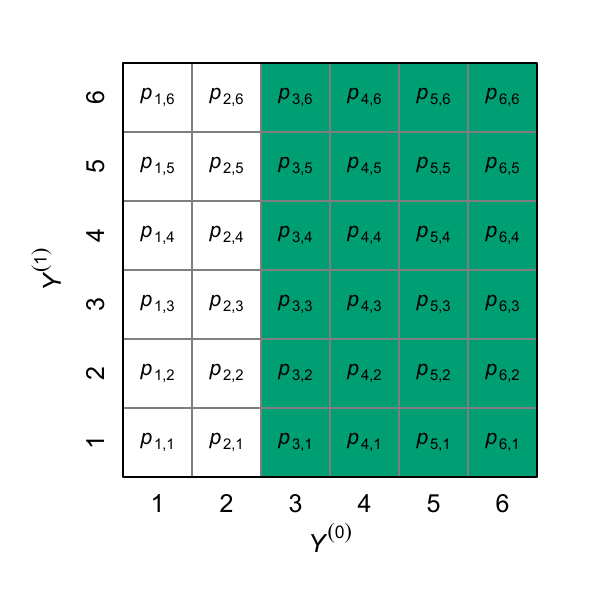}
\end{minipage}%
\begin{minipage}{0.333\textwidth}
\centering
\includegraphics[width=\linewidth]{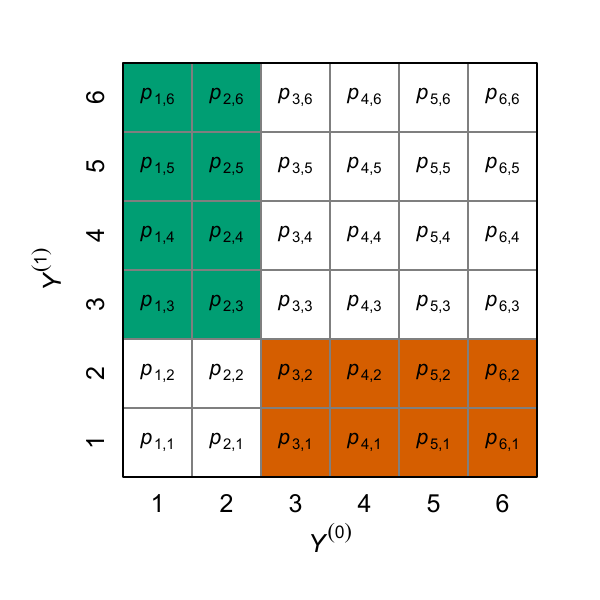}
\end{minipage}
\caption{\fbox{\parbox{0.95\textwidth}{%
Estimands for an ordinal outcome with six levels. The matrix represents the joint probability table of $(Y(0),Y(1))$. The colored cells in the left and middle panels correspond to the events $P\{Y(1)\ge 3\}$ and $P\{Y(0)\ge 3\}$, respectively. The right panel shows $\Delta_3$, defined as the difference between these two probabilities, where green cells indicate probabilities to be summed and orange cells indicate probabilities to be subtracted.
}}}
%
\label{fig:delta}
\end{figure}

There are $J-1$ possible contrasts $(\Delta_1\equiv0$ by definition), and a weighted average of the form
\begin{equation} \label{eq:wDelta}
\sum_{j=1}^J w_j[P\{Y(1)\ge j\}-P\{Y(0)\ge j\}],
\end{equation}
where $w_1,\ldots,w_J$ are positive weights, can be used to summarize the causal effect. While this measure depends only on the marginal distributions and is therefore identifiable, it depends on the arbitrary selection of the set of weights  \citep{chiba2017sharp}. In fact, it is easy to show, using the tail-sum representation of the expectation, that the weighted average in \eqref{eq:wDelta} is equivalent to \eqref{eq:sATE} when $s_0=0$ and $w_j=s_j-s_{j-1}$, or equivalently $s_j=\sum_{\ell\le j} w_\ell$ ($j=1,\ldots,J$). 

Alternatively, one may consider causal measures defined in terms of the joint distribution $P$ of $(Y(0),Y(1))$. 
Lu et al.\cite{lu2018treatment}~study bounds for the causal estimands:
\begin{equation} \label{eq:etau}
    \tau = P\{Y(1)\ge Y(0)\}\quad \rm{and}\quad  \eta = P\{Y(1)>Y(0)\},
\end{equation}
which measure the fraction of individuals who are not worse off and who benefit from treatment, respectively. Panel (a) of Figure \ref{fig:bounds} depicts in green the cells corresponding to $\tau$ on the joint distribution table of $(Y(0),Y(1))$; the entries corresponding to $\eta$ are obtained by removing the diagonal of the table. The estimands are the sums of probabilities of these cells. Because the joint distribution of the potential outcomes is not observable, these estimands are generally not identifiable and must instead be bounded.

\begin{figure}[tbp]
\centering
\begin{overpic}[width=0.43\textwidth]{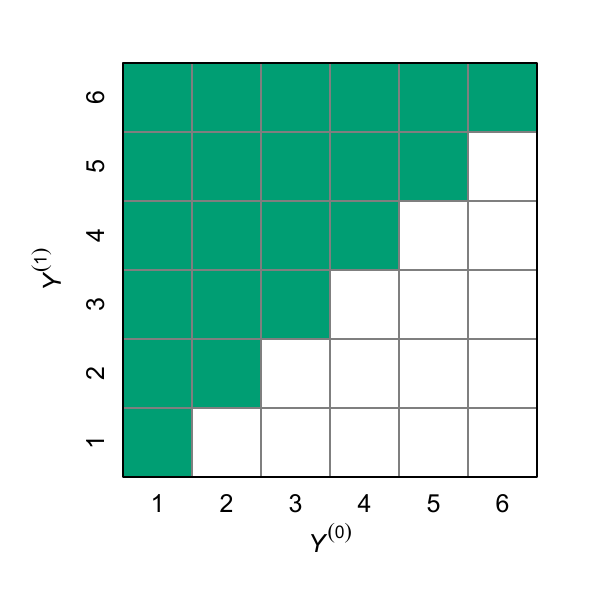}
  \put(2,92){\textbf{(a)}}
\end{overpic}
\hfill

\vspace{-0.3em}

\begin{overpic}[width=0.43\textwidth]{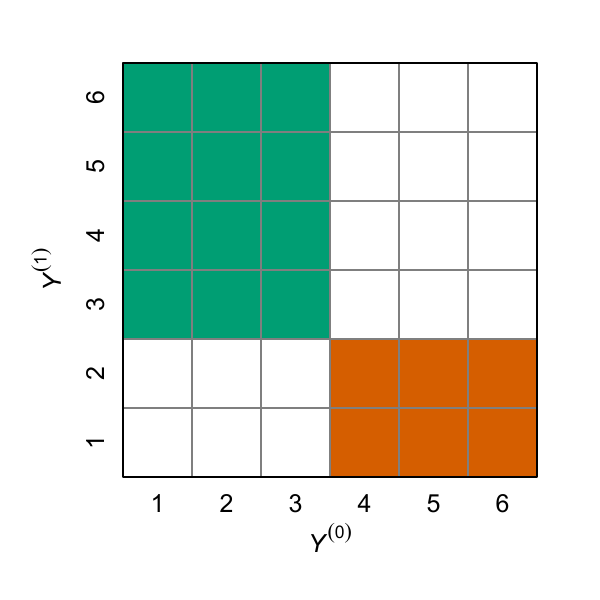}
  \put(2,92){\textbf{(b1)}}
\end{overpic}
\hfill
\begin{overpic}[width=0.43\textwidth]{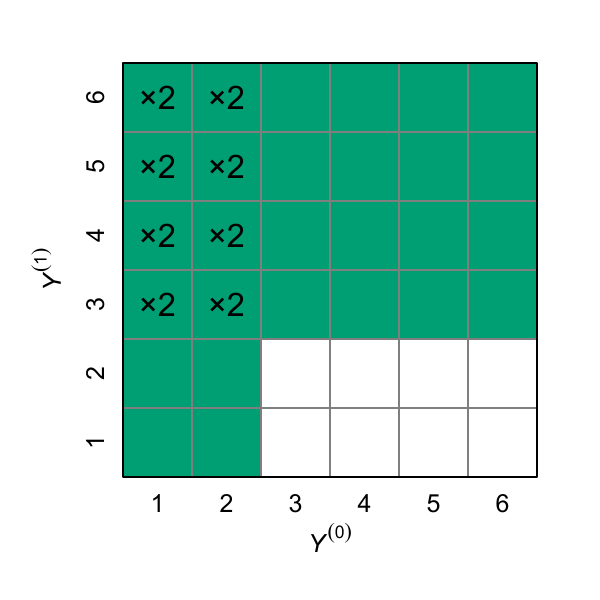}
  \put(2,92){\textbf{(b2)}}
\end{overpic}

\vspace{-0.3em}

\begin{overpic}[width=0.43\textwidth]{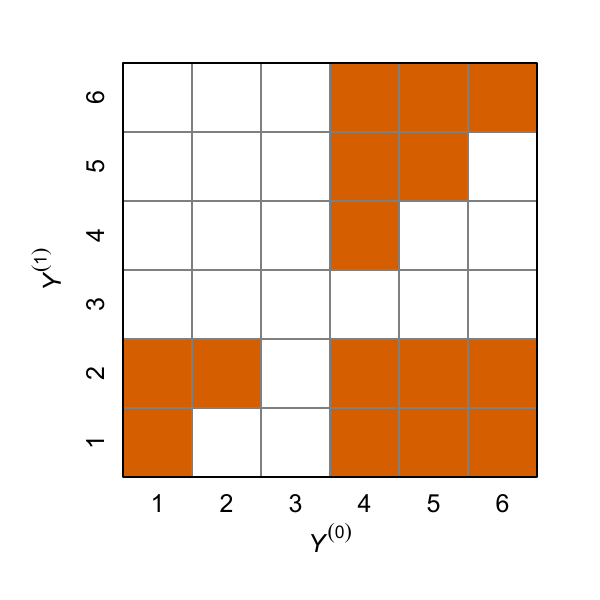}
  \put(2,92){\textbf{(c1)}}
\end{overpic}
\hfill
\begin{overpic}[width=0.43\textwidth]{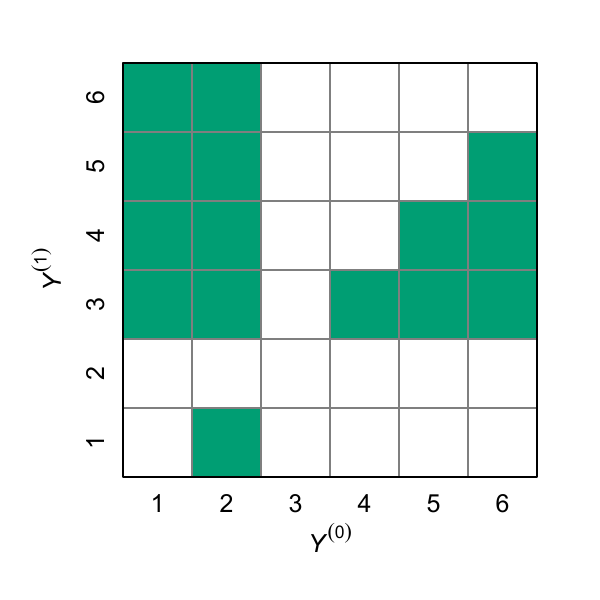}
  \put(2,92){\textbf{(c2)}}
\end{overpic}
\caption{\setlength{\fboxsep}{6pt}\fbox{\parbox{0.95\textwidth}{%
Probability of functionals on the joint probability table of $(Y(0),Y(1))$. Entries corresponding to $\tau$ are shown in \textbf{(a)}. Candidate lower \textbf{(b1)} and upper \textbf{(b2)} bounds for $\tau$ are shown for $j=3$. The bounds sum the green entries and subtract the orange ones. Entries marked with $\times 2$ are summed twice. The differences between the lower \textbf{(c1)} and upper \textbf{(c2)} bounds and $\tau$ are shown.
}}}
\label{fig:bounds}
\end{figure}


Huang et al.\cite{huang2017inequality}~develop bounds for the fraction of individuals who benefit from treatment, $\eta$, under a more general setting in which $P$ may have structure, for example when certain cells of $P$ are structural zeros. Other related estimands include stochastic superiority of the treatment outcome $\eta + 0.5P\{Y(1)=Y(0)\}$ \citep{cheng2009estimation}, the relative treatment effect $P\{Y(1)>Y(0)\}-P\{Y(0)>Y(1)\} = \tau+\eta-1$ \citep{lu2020sharp}, and a modified $\tau$ estimand that subtracts the worst outcome $\tau-P\{Y(1)=Y(0)=0\}$ \citep{chiba2017sharp}.

While each measure has its own advantages and limitations, they are generally not identifiable from the marginal distributions alone and require either additional assumptions for identification or replacement by bounds. In this paper we focus on the estimands $\eta$ and $\tau$ given in \eqref{eq:etau}, and revisit the positive-dependence assumptions on $P$  used in order to improve the bounds\cite{lu2018treatment}.

\section{Model-free bounds for $\tau$ and $\eta$} \label{sec:bounds}

Lu et al.\cite{lu2018treatment}~provide sharp model-free bounds for $\tau$ and $\eta$ through a direct algebraic analysis of the probability matrix. We illustrate the validity of their bounds using colored matrices, which provide a convenient graphical tool for this purpose. We emphasize that the bounds must be identifiable, that is, being functionals of the marginal distributions alone.

\begin{prop}  \label{prop:tau}
    (Lu et al.\cite{lu2018treatment}~Proposition 1.) The sharp lower and upper bounds of $\tau$ are
    \begin{equation}
        \tau_L = \max_{1\le j\le J}\{P\{Y(0)=j\} + \Delta_j\} \qquad
        \tau_U = \min_{1\le j \le J} \{1+\Delta_j\}.
    \end{equation}
\end{prop}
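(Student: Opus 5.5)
Write $p_{k\ell}=P\{Y(0)=k,Y(1)=\ell\}$, so that $\tau=\sum_{\ell\ge k}p_{k\ell}$. The proof has two parts: validity, meaning $\tau_L\le\tau\le\tau_U$ for every joint distribution with the given marginals, and sharpness, meaning both bounds are attained by some coupling of $P_0$ and $P_1$.

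\emph{Validity.} Fix $j$. I would rewrite the lower-bound term as
\[
P\{Y(0)=j\}+\Delta_j=P\{Y(1)\ge j\}-P\{Y(0)\ge j+1\}.
\]
Splitting the event $\{Y(1)\ge j\}$ according to whether $Y(0)\le j$ or $Y(0)>j$ gives
\[
P\{Y(1)\ge j\}\le P\{Y(1)\ge j,\,Y(0)\le j\}+P\{Y(0)\ge j+1\}.
\]
The first event on the right is contained in $\{Y(1)\ge Y(0)\}$, so its probability is at most $\tau$. Rearranging gives $\tau\ge P\{Y(0)=j\}+\Delta_j$ for every $j$, and hence $\tau\ge\tau_L$.

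For the upper bound, I would split $\{Y(0)\ge j\}$ according to whether $Y(1)\ge j$ or $Y(1)<j$. This gives
\[
P\{Y(0)\ge j\}\le P\{Y(1)\ge j\}+P\{Y(1)<j\le Y(0)\}.
\]
The last event is contained in $\{Y(1)<Y(0)\}$, whose probability is $1-\tau$. Hence $\tau\le 1+\Delta_j$ for all $j$, and so $\tau\le\tau_U$. In the colored-matrix language, these inequalities say that the cells counted in (b1) and (b2) of Figure~\ref{fig:bounds} differ from those of $\tau$ only by cells of nonnegative (respectively nonpositive) total sign.

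\emph{Sharpness.} This is the main step. I would view it as a transportation problem: a coupling of $P_0$ and $P_1$ is a flow from $Y(0)$-levels with supplies $P\{Y(0)=k\}$ to $Y(1)$-levels with demands $P\{Y(1)=\ell\}$. The quantity $\max_P P\{Y(1)<Y(0)\}=1-\min_P\tau$ is the maximum flow that can be routed using only the edges $k\to\ell$ with $\ell<k$. By the max-flow/min-cut theorem (equivalently the supply--demand form of Hall's theorem), this maximum equals
\[
\min_{A\subseteq\{1,\dots,J\}}\bigl[P_0(A^c)+P_1(N(A))\bigr],
\]
where $N(A)=\{\ell:\ell<\max A\}$. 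Note that any mass left over after this maximum flow can always be routed arbitrarily to complete a coupling, since total supply equals total demand. So the maximum is attained by a genuine joint distribution.

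The staircase structure of the allowed edges is what makes the minimization tractable. $N(A)$ depends on $A$ only through $m=\max A$, so for fixed $m$ it is optimal to take $A=\{1,\dots,m\}$. The minimum then becomes
\[
\min_m\bigl[P\{Y(0)>m\}+P\{Y(1)<m\}\bigr],
\]
and one checks that $1$ minus this quantity is exactly $\max_m[P\{Y(1)\ge m\}-P\{Y(0)>m\}]=\tau_L$.

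The same argument with allowed edges $\ell\ge k$ gives $\max_P\tau=\min_m[P\{Y(0)<m\}+P\{Y(1)\ge m\}]$, with the optimal $A=\{m,\dots,J\}$ and $N(A)=\{\ell\ge m\}$. This equals $\min_m(1+\Delta_m)=\tau_U$.

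The delicate points are two. First, the cut formula must be stated with care, including the empty set $A$ and the boundary values $m=1$ and $m=J$. Second, one must verify that restricting to initial or final segments loses nothing. If that verification becomes messy, a fallback is an explicit greedy construction: fill $P$ row by row along the staircase (a north-west-corner type rule) and check directly that the row index attaining the maximum in $\tau_L$ is where the mass gets blocked.
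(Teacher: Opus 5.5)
Your proof is correct. Its validity half is the paper's argument in words: your splittings of $\{Y(1)\ge j\}$ and $\{Y(0)\ge j\}$ are exactly what panels (b1)--(c2) of Figure~\ref{fig:bounds} display. For instance, $\tau$ exceeds the $j$th lower candidate by the mass of $\{Y(0)\le Y(1)<j\}\cup\{j<Y(0)\le Y(1)\}\cup\{Y(1)<j<Y(0)\}$.

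Where you genuinely differ is sharpness. The paper, deferring to Lu et al., argues constructively. For the index attaining the maximum (or minimum), one exhibits a joint distribution with the given marginals and structural zeros on the cells of (c1) (or (c2)), so that this candidate equals $\tau$. You instead identify $\min_P\tau$ and $\max_P\tau$ as values of a transportation problem and compute them by max-flow/min-cut.

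Your route derives the formulas rather than certifying them, and needs no case analysis of which $j$ is active. It also carries over verbatim to Proposition~\ref{prop:eta}: edges $\ell\le k$ give $\min_P\eta=\max_m[P\{Y(1)>m\}-P\{Y(0)>m\}]=\eta_L$, and edges $\ell>k$ give $\max_P\eta=\min_m\{1+\Delta_m-P\{Y(1)=m\}\}=\eta_U$. The constructive route, in exchange, hands you an explicit extremal coupling with a visible zero pattern. That fits the colored-matrix viewpoint the paper builds on in Section~\ref{sec:DTD}.

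Finally, the two points you flag as delicate are immediate. For fixed $m=\max A$, enlarging $A$ inside $\{1,\dots,m\}$ only lowers $P_0(A^c)$ and leaves $N(A)$ unchanged. And $A=\emptyset$ has cut value $1$, never below the value at $m=J$ (or at $m=1$ for $\max_P\tau$). So the greedy fallback is unnecessary.
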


We refer to the terms inside the minimum and maximum operators as candidate bounds, since each provides a valid bound for the estimand, and the sharp bounds are obtained by taking the minimum or maximum over these quantities. Panels (b1) and (b2) of Figure \ref{fig:bounds} illustrate the lower and upper candidate bounds of $\tau$ corresponding to $j=3$ for an outcome with six levels; see Figure \ref{fig:delta} for $\Delta_3$ and Panel (a) of Figure \ref{fig:bounds} for $\tau$ for the same model. Web Figures 1 and 2 depict all the lower and upper candidate bounds for $\tau$ for that model, respectively. Specifically, Panel (b1) depicts $P\{Y(0)=3\}+\Delta_3$.  The lower candidate bound for $\tau$ is obtained by summing the green entries in the colored matrix and subtracting the orange entries. As is readily seen, this quantity is smaller than $\tau$ (see Panel (a)). Since a similar construction applies for each $j$, taking the maximum over $j$ yields a lower bound. Panel (b2) depicts $1+\Delta_3$. The bound is obtained by summing the green entries, with those marked by $\times 2$ counted twice. This quantity clearly bounds $\tau$ from above. Since an analogous construction applies for each $j$, taking the minimum over $j$ yields the upper bound. 

The matrices illustrate how the bounds differ from $\tau$ by showing which entries are captured and which are not. Panels (c1) and (c2) of Figure \ref{fig:bounds} depict the differences between the bounds and $\tau$ (obtained by subtracting Panel (a) from Panels (b1) and (b2)). Only when the probabilities in these colored cells are small the bounds are close to $\tau$. It is important to emphasize that Figure \ref{fig:bounds} presents both the lower and upper candidate bounds for $j=3$ while the global upper and lower bounds are  obtained for different cutoffs $j$; see Web Figures 1 and 2 for all lower and upper candidate bounds of $\tau$.



To establish that the lower bound is sharp, Lu et al\cite{lu2018treatment}~show that for any marginal distributions $P_0$ and $P_1$ there exists a joint distribution $P$ with marginals $P_0$ and $P_1$ such that $\tau=\tau_L$. An analogous argument is used for the upper bound. In general, the upper and lower bounds are attained at different $j$s, which are determined by the marginal probabilities (see Proposition \ref{prop:tau}). In the matrix representation, this amounts to constructing a joint distribution with marginals $P_0$ and $P_1$ and with structural zeros represented as colored cells in Panels (c1) and (c2) of Figure \ref{fig:bounds} (for the appropriate $j$ levels in which the bounds are attained). The matrix representation therefore provides a convenient way to visualize the joint distributions that attain the proposed bounds. 



\begin{prop} \label{prop:eta}
    (Lu et al.\cite{lu2018treatment}~Proposition 2.) The sharp lower and upper bounds of $\eta$ are
    \begin{equation}
        \eta_L = \max_{1\le j\le J}\{\Delta_j\} \qquad
        \eta_U =  \min_{1\le j \le J} \{1 + \Delta_j - P\{Y(1)=j\}\}.
    \end{equation}
\end{prop}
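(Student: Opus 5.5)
The proof has two parts. First I show that each candidate term is a valid bound for every coupling $P$ with marginals $P_0$ and $P_1$. Then I show sharpness by exhibiting, for any marginals, a coupling that attains the extremal candidate.

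\textbf{Validity.} Write $p_{ik}=P\{Y(0)=i,Y(1)=k\}$. For fixed $j$,
\[
\Delta_j=\sum_{k\ge j}\sum_i p_{ik}-\sum_{i\ge j}\sum_k p_{ik}
=P\{Y(0)<j\le Y(1)\}-P\{Y(1)<j\le Y(0)\}.
\]
The first event implies $Y(1)>Y(0)$, so $\Delta_j\le \eta$ for every $j$. This gives $\eta\ge\eta_L$. In the colored-matrix language, the green block $\{i<j\le k\}$ lies strictly above the diagonal and the orange entries are simply dropped.

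For the upper bound I rewrite
\[
1+\Delta_j-P\{Y(1)=j\}=P\{Y(1)>j\}+P\{Y(0)<j\},
\]
using $1-P\{Y(0)\ge j\}=P\{Y(0)<j\}$. If $Y(1)>Y(0)$, then either $Y(1)>j$, or $Y(1)\le j$, in which case $Y(0)<Y(1)\le j$ gives $Y(0)<j$. So $\{Y(1)>Y(0)\}\subseteq\{Y(1)>j\}\cup\{Y(0)<j\}$, and the union bound gives $\eta\le\eta_U$. Alternatively, this half follows from Proposition \ref{prop:tau} by symmetry. Since $\eta=1-P\{Y(0)\ge Y(1)\}$, the quantity $P\{Y(0)\ge Y(1)\}$ is a $\tau$-type quantity with the roles of the arms swapped. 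Applying the lower bound of Proposition \ref{prop:tau} to it and simplifying yields $\eta_U$.

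\textbf{Sharpness.} The symmetry argument also transfers sharpness of $\eta_U$ from the sharpness of $\tau_L$ already established by Lu et al. So only sharpness of $\eta_L$ needs an explicit construction. I need a joint law $P$ with the given marginals in which both $P\{Y(1)<j^*\le Y(0)\}=0$ and every strictly upper-diagonal cell lies in the block $\{i<j^*\le k\}$, where $j^*$ is the maximizing index.

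The natural candidate is the comonotone (quantile) coupling $Y(0)=F_0^{-1}(U)$, $Y(1)=F_1^{-1}(U)$ with one uniform $U$. Under this coupling,
\[
\eta=\lambda\{u: F_1^{-1}(u)>F_0^{-1}(u)\}.
\]
I expect this set not to be an interval in general, so the comonotone coupling is not the right choice.

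\textbf{Main obstacle.} The main obstacle is the sharpness of $\eta_L$. The clean route is via Proposition \ref{prop:tau} again. For integer-valued outcomes, $\eta=P\{Y(1)\ge Y(0)+1\}$, which is $\tau$ computed for the pair $(Y(0)+1,Y(1))$. The shifted variable has marginal $P\{Y(0)+1=j\}=P\{Y(0)=j-1\}$ on $\{2,\dots,J+1\}$, and both variables can be placed on the common support $\{1,\dots,J+1\}$.

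I would apply the lower bound in Proposition \ref{prop:tau} to this pair, together with its sharpness. The candidate at level $j$ becomes
\[
P\{Y(0)=j-1\}+P\{Y(1)\ge j\}-P\{Y(0)\ge j-1\}=P\{Y(1)\ge j\}-P\{Y(0)\ge j\}=\Delta_j .
\]
The maximum over $j$ is therefore $\eta_L$; the levels $j=1$ and $j=J+1$ contribute $0$, which is consistent because $\Delta_1=0$. Because shifting is a bijection between couplings, sharpness transfers directly. The remaining checks are that Proposition \ref{prop:tau} is valid on an enlarged common support and that the endpoint terms are handled correctly.
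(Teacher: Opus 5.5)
Your proposal is correct, but for the lower bound it takes a different route from the paper. The paper (following Lu et al.) gets both halves of the statement, sharpness included, from the single identity $\eta=1-P\{Y(0)\ge Y(1)\}$. Swapping the arms turns $\Delta_j$ into $-\Delta_j$. The sharp upper bound of Proposition~\ref{prop:tau} then gives $P\{Y(0)\ge Y(1)\}\le\min_j\{1-\Delta_j\}=1-\eta_L$. Its sharp lower bound gives $P\{Y(0)\ge Y(1)\}\ge\max_j\{P\{Y(1)=j\}-\Delta_j\}=1-\eta_U$. The paper adds a graphical explanation of validity: a row or column of marginal mass is deleted from the colored matrices of the $\tau$ candidates.

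You use the swap only with $\tau_L$, which is why sharpness of $\eta_L$ looks like an obstacle to you. The same swap applied to $\tau_U$ settles it in one line. Your replacement, writing $\eta$ as $\tau$ for the pair $(Y(0)+1,Y(1))$ on $\{1,\ldots,J+1\}$, is nonetheless valid. The candidates reduce to $\Delta_j$, the end terms equal $0=\Delta_1$, and the shift is a bijection between couplings.

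The one point to nail down is that sharpness in Proposition~\ref{prop:tau} holds for marginals with empty categories; here $P\{Y(0)+1=1\}=P\{Y(1)=J+1\}=0$. This is covered by the paper's ``any marginal distributions.'' It also follows by approximating with full-support marginals, using compactness of the set of couplings and continuity of $\tau$ and $\tau_L$.

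What your route buys is twofold. Both sharpness claims rest on the lower half of Proposition~\ref{prop:tau} alone. Your validity proofs are direct event inclusions, $\{Y(0)<j\le Y(1)\}\subseteq\{Y(1)>Y(0)\}\subseteq\{Y(1)>j\}\cup\{Y(0)<j\}$, which are more self-contained than the colored-matrix argument. The paper's route is shorter and needs no enlargement of the support.
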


Lu et al.\cite{lu2018treatment}~derive the bounds for $\eta$ by noting that
\[ \eta=1-P\{Y(0)\ge Y(1)\}, \]
which equals ``$1-\tau$'' after interchanging the roles of $Y(0)$ and $Y(1)$.

Graphically, $\eta$ can be obtained from $\tau$ by subtracting the probabilities along the diagonal $Y(0)=Y(1)$. Therefore, the lower bound for $\eta$ can be derived from the candidate lower bounds for $\tau$ after subtracting the diagonal probabilities. However, since the diagonal probabilities are not identifiable from $P_0$ and $P_1$ alone, subtraction must correspond to marginal probabilities. Consequently, each candidate lower bound indexed by $j$ for $\eta$ is obtained from the corresponding candidate lower bound for $\tau$ after subtracting either $P\{Y(0)=j\}$ or $P\{Y(1)=j\}$.

To illustrate, consider the left panel of Panel (b1) of Figure~\ref{fig:bounds} that shows the candidate lower bound for $j=3$. Subtracting the diagonal probability $P\{Y(0)=Y(1)=3\}$ can be implemented either through the marginal probability $P\{Y(0)=3\}$ or through $P\{Y(1)=3\}$. In the first case, the resulting bound equals $\Delta_{j+1}=\Delta_4$; in the second case it equals $\Delta_j=\Delta_3$. This yields the candidate lower bounds stated in Proposition~\ref{prop:eta}.

Next, consider the upper bound for $\eta$. Since $\eta \le \tau$ by definition, the upper bound for $\tau$ is also an upper bound for $\eta$. However, a tighter bound can often be obtained. Consider the candidate upper bound for $\tau$ corresponding to $j=3$, shown in Panel (b2) of Figure \ref{fig:bounds} (see Web Figure 2 for all candidate upper bounds for $\tau$). To obtain an upper bound for $\eta$, only the strictly upper-triangular cells are relevant. The matrix representation suggests that a tighter bound can be obtained by subtracting either a row or a column in such a way that all upper-triangular cells (excluding the diagonal) remain green (corresponding to summed probabilities), while none of the other cells change to orange (corresponding to subtracted probabilities). In this example, this can be achieved by subtracting either the third row or the second column, which yields the candidate upper bounds stated in Proposition~\ref{prop:eta}.

In summary, the colored matrices provide a useful tool for building intuition about the estimands, the bounds, and the differences between them. They help explain the often wide interval between the lower and upper bounds observed in real data, which can limit their practical usefulness.

\section{Bounds under positive dependence} \label{sec:DTD}
\subsection{Independence lower bounds}

The bounds derived above rely only on the marginal distributions of the potential outcomes and therefore require no assumptions about their joint distribution. 
%
%
Let $p^0_j=P\{Y(0)=j\}$ and $p^1_j=P\{Y(1)=j\}$. Under independence of $Y(0)$ and $Y(1)$, the estimands $\tau$ and $\eta$ are identifiable from the marginal distributions and are given by:
\begin{equation} \label{eq:ind}
    \tau_I=\sum_{j=1}^J \sum_{k=j}^Jp^0_jp^1_k \quad {\rm{and}} \quad \eta_I=\sum_{j=1}^{J-1} \sum_{k=j+1}^Jp^0_jp^1_k \;.
\end{equation} 
Lu et al.\cite{lu2018treatment}~suggest to use $\tau_I$ and $\eta_I$ as lower bounds for $\tau$ and $\eta$ when $Y(0)$ and $Y(1)$ are positively associated. They illustrate through several examples that these bounds can be substantially tighter than the model-free bounds in Propositions \ref{prop:tau} and \ref{prop:eta}. However, the exact dependence property guaranteeing the independent bounds hold is not discussed. Surprisingly, standard positive dependence properties are not sufficient for \eqref{eq:ind} to hold, as demonstrated below.

\subsection{Dependence notions} \label{sec:dep_indices}

Let $(U,V)$ be a pair of ordinal random variables with a joint distribution $F$. Familiar measures of association for ordinal data include Kendall's tau and Spearman's correlation. These measures, commonly used in data analysis, also have probabilistic representations. Let $(U_1,V_1),(U_2,V_2),(U_3,V_3)$ be three independent copies from $F$. Kendall's tau is defined as 
$$
{\rm Cov}\{{\rm sign}(U_2-U_1),{\rm sign}(V_2-V_1)\},
$$ 
where ${\rm sign}(a)$ equals 1 if $a>0$, -1 if $a<0$, and 0 if $a=0$. 

Spearman's correlation is proportional to \citep{lehmann1966concepts}
$$
{\rm Cov}\{{\rm sign}(U_2-U_1),{\rm sign}(V_3-V_1)\},
$$ 
and can also be written as the covariance between transformations of $U$ and $V$ according to their marginal mid-distribution functions (e.g., $F^*(u)=P(U\le u)-0.5P(U=u)$). If $U_1$ and $V_1$ tend to be small or large together, the products of the signs in the covariances above tend to be positive, indicating positive association.

A stronger notion of positive dependence is positive quadrant dependence (PQD), which compares the joint probability of large values of the variables to the probability obtained under independence with the same marginal distributions. It is defined as follows\citep{lehmann1966concepts}:
\begin{definition} \label{def:PQD}
     $U$ and $V$ are PQD if for all $j$ and $k$ in the supports of $U$ and $V$,
    \[
    P(U\ge j, V\ge k) \ge P(U\ge j)P(V\ge k).
    \]
\end{definition}
PQD implies non-negative Kendall's tau and Spearman's correlation and is therefore a stronger notion of positive dependence \citep{lehmann1966concepts}.
As PQD compares the joint distribution and the product of marginals, one might hope that this dependence concept guarantees the validity of the independence bounds in \eqref{eq:ind}. However, this is not the case. As a simple example, consider the case where $P(U=V)=1$. Then $(U,V)$ is PQD while $\eta=0$ but $\eta_I$ is positive.

PQD implies that $P(V\ge k\mid U\ge j) \ge P(V\ge k)$ (when $P(U\ge j)>0$). An even stronger notion of dependence \citep{lehmann1966concepts} is positive regression dependence (PRD) defined as follows:
\begin{definition}
    $V$ is PRD on $U$ if for all $j$ and $k$ in the supports of $U$ and $V$, $$P(V\ge j\mid U=k)$$ is non decreasing in $k$.
\end{definition}

PRD appears to be a natural dependence assumption for anyone who believes that the treatment is effective. It states that the potential outcome under treatment, $Y(1)$, stochastically increases with the potential outcome under no treatment, $Y(0)$. Equivalently, individuals who would be better off under no treatment are also expected to be better off under treatment.

However, we next show an example of a PRD random pair for which the independent bounds \eqref{eq:ind} fail for both $\eta$ and $\tau$. Thus, none of the positive dependence notions above guarantees validity of these bounds.

\subsection{Numerical examples} 

Table \ref{tab:joint} presents an example of a joint distribution of the potential outcomes for an ordinal variable with three levels. It also shows the corresponding joint distribution under independence with the same marginals. 

\begin{table}[tb]
\centering
\begin{tabular}{c c ccc @{\hspace{1cm}} ccc c}
\toprule
 &  & \multicolumn{3}{c}{joint} & \multicolumn{3}{c}{independence} & \\
\cmidrule(lr){3-5} \cmidrule(lr){6-8}
 &  &  & {$Y(0)$} &  &  & {$Y(0)$} &  & \\
 &  & 1 & 2 & 3 & 1 & 2 & 3 & $P\{Y(1)=y\}$ \\
\midrule
 & 3 & 0.03 & 0.03 & 0.19 & 0.035 & 0.030 & 0.185 & 0.25 \\
$Y(1)$ & 2 & 0.08 & 0.07 & 0.50 & 0.091 & 0.078 & 0.481 & 0.65 \\
 & 1 & 0.03 & 0.02 & 0.05 & 0.014 & 0.012 & 0.074 & 0.10 \\
\midrule
 & $P\{Y(0)=y\}$ & 0.14 & 0.12 & 0.74 & 0.14 & 0.12 & 0.74 & 1 \\
\bottomrule
\end{tabular}
\caption{Joint distribution of the potential outcomes and the corresponding joint distribution under the working independence assumption.}
\label{tab:joint}
\end{table}

Simple calculations show that 
$$
\eta=0.14<0.156=\eta_I \quad {\rm and} \quad \tau=0.43<0.433=\tau_I,
$$ 
so the working independence assumption does not provide lower bounds for either estimand. 

It is easy to verify that this distribution satisfies PRD, that is,  $P\{Y(1)\ge j \mid Y(0)=k\}$  is non-decreasing in $k$. 
Thus, PRD is not sufficient for the validity of the independence bounds, although the discrepancy is modest in this example. This need not hold in general, and, as shown below, the differences between the independent bounds and the estimands can be substantial even under PRD.

Under the no-treatment-effect case $P\{Y(0)=Y(1)\}=1$, PRD holds and $\eta=0$. Suppose that $Y(0)$, and hence also $Y(1)$, is uniformly distributed over $\{1,...,J\},$ then  $\eta_I=(J-1)/(2J).$
For example, for $J=3$ this equals 1/3, and as $J\rightarrow \infty$ it approaches 1/2.

Turning to $\tau$, let $0\le \epsilon\le 0.5$ and consider the joint distribution in Table \ref{tab:tau_example}, together with the corresponding distribution under the working independence assumption. Simple calculations show that 
$$
\tau=2\epsilon \quad {\rm and} \quad \tau_I=\frac{1+12\epsilon-4\epsilon^2}{8}.
$$ 
Thus, as $\epsilon\rightarrow 0$, $\tau \rightarrow 0$ while $\tau_I\rightarrow0.125$.

\begin{table}[tb]
\centering
\renewcommand{\arraystretch}{1.5}
\begin{tabular}{c c ccc @{\hspace{1cm}} ccc c}
\toprule
 &  & \multicolumn{3}{c}{joint} & \multicolumn{3}{c}{independence} & \\
\cmidrule(lr){3-5} \cmidrule(lr){6-8}
 &  & & {$Y(0)$} & & & {$Y(0)$} & & \\
 &  & 1 & 2 & 3 & 1 & 2 & 3 & $P\{Y(1)=y\}$ \\
\midrule
 & 3 & 0 & 0 & $\epsilon$ & $\epsilon^2$ & $\frac{(1-2\epsilon)\epsilon}{4}$  & $\frac{(3-2\epsilon)\epsilon}{4}$ & $\epsilon$ \\
$Y(1)$ & 2 &0 & 0 & $\frac{1-2\epsilon}{2}$  & $\frac{(1-2\epsilon)\epsilon}{2}$ & $\frac{(1-2\epsilon)^2}{8}$ & $\frac{(1-2\epsilon)(3-2\epsilon)}{8}$ & $\frac{1-2\epsilon}{2}$ \\
 & 1 & $\epsilon$ & $\frac{1-2\epsilon}{4}$  & $\frac{1-2\epsilon}{4}$  & $\frac{\epsilon}{2}$ & $\frac{1-2\epsilon}{8}$ & $\frac{3-2\epsilon}{8}$ & $\frac{1}{2}$ \\
\midrule
 & $P\{Y(0)=y\}$ & $\epsilon$ & $\frac{1-2\epsilon}{4}$ & $\frac{3-2\epsilon}{4}$ &  $\epsilon$ & $\frac{1-2\epsilon}{4}$ & $\frac{3-2\epsilon}{4}$ & 1 \\
\bottomrule
\end{tabular}
\caption{Joint distribution of the potential outcomes illustrating a case where $\tau<\tau_I$, together with the corresponding distribution under the working independence assumption.}
\label{tab:tau_example}
\end{table}

In summary, none of the standard positivity measures guarantees either the validity of the independence-based bounds or their proximity to the true estimands. A different justification criterion is required if one wishes to use the typically tighter independence-based bounds.

\subsection{ A new dependence notion}

The colored matrix in Panel (a) of Figure \ref{fig:bounds} suggests simple notions of dependence between $Y(0)$ and $Y(1)$ that guarantee the lower bounds $\eta_I$ and $\tau_I$ in \eqref{eq:ind}.

\begin{definition}
Let $(U,V)$ be a pair of ordinal random variables both supported on $\{1,\cdots,J\}$. We say that $V$ satisfies  open-tail diagonal tail dominance (open-tail DTD) on $U$ if for all $1\le j \le J$
\[
P(V>j|U=j)\ge P(V>j).
\]
We say that $V$ satisfies a closed-tail diagonal tail dominance (closed-tail DTD) on $U$ if for all $1\le j \le J$
\[
P(V\ge j|U=j)\ge P(V\ge j).
\]
\end{definition} 

\begin{prop} \label{prop:DTD}
Let $(Y(0),Y(1))$ be the potential outcomes of an ordinal response with support $\{1,\ldots,J\}.$
    \begin{enumerate}
    \renewcommand{\labelenumi}{(\roman{enumi})}
        \item If $Y(1)$ satisfies  open-tail DTD on $Y(0)$ then $\eta_I\le \eta$.
        \item If $Y(1)$ satisfies  closed-tail DTD on $Y(0)$ then $\tau_I\le \tau$.
    \end{enumerate}
    
\end{prop}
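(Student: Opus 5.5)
The plan is to decompose each estimand by conditioning on $Y(0)$ and compare term by term with the independence expressions in \eqref{eq:ind}. By the law of total probability,
\[
\eta = P\{Y(1)>Y(0)\} = \sum_{j=1}^J P\{Y(0)=j\}\,P\{Y(1)>j \mid Y(0)=j\},
\]
where terms with $p^0_j=0$ are simply omitted (they contribute zero to both sides). Similarly,
\[
\tau = \sum_{j=1}^J p^0_j\,P\{Y(1)\ge j \mid Y(0)=j\}.
\]

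On the independence side, swapping the order of the inner sum shows that
\[
\eta_I=\sum_{j} p^0_j \sum_{k>j} p^1_k = \sum_j p^0_j\,P\{Y(1)>j\}
\qquad\text{and}\qquad
\tau_I=\sum_j p^0_j\,P\{Y(1)\ge j\}.
\]
So each quantity is a $p^0$-weighted average of diagonal tail probabilities. In $\eta$ and $\tau$ the tails are conditional on $Y(0)=j$; in $\eta_I$ and $\tau_I$ they are unconditional.

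Part (i) then follows directly. Open-tail DTD states that $P\{Y(1)>j\mid Y(0)=j\}\ge P\{Y(1)>j\}$ for every $j$ with $p^0_j>0$. Multiplying by $p^0_j\ge 0$ and summing over $j$ gives $\eta\ge\eta_I$. Part (ii) is identical, with closed tails $\ge j$ replacing open tails and closed-tail DTD used termwise.

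There is no real obstacle here. The only point needing care is the conditioning when $p^0_j=0$. In that case the conditional probability in the definition is undefined, and I would interpret DTD as required only on the support of $Y(0)$. This causes no difficulty, because such $j$ receive zero weight in both decompositions.
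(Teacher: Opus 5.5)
Your proposal is correct and is essentially the paper's proof. The paper rewrites open-tail DTD as $P\{Y(1)>j,Y(0)=j\}\ge P\{Y(1)>j\}P\{Y(0)=j\}$ and sums over $j$, which is your termwise comparison of the $p^0$-weighted decompositions of $\eta$ and $\eta_I$, and part (ii) is handled analogously; your handling of $p^0_j=0$ is extra care that the paper's joint-probability form gives automatically.
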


\begin{proof}
The open-tail DTD condition can be equivalently written as
\[
P\{Y(1)>j,Y(0)=j\}\ge P\{Y(1)>j\}P\{Y(0)=j\}.
\]
Summing over $j$ gives 
\[
\eta=\sum_jP\{Y(1)>j,Y(0)=j\}\ge \sum_j P\{Y(1)>j\}P\{Y(0)=j\} = \eta_I,
\]
which proves part (i). The proof of part(ii) is analogous. 
\end{proof}

The open- and closed-tail DTD conditions are sufficient for \eqref{eq:ind} to provide lower bounds, but they are not necessary. As  demonstrated below, there exist joint distributions for which $\tau_I<\tau$ and $\eta_I<\eta$  even though the conditions fail.

These sufficient conditions do not enforce classical positive dependence.
In particular, there exist joint distributions satisfying both open- and closed-tail DTD
for which Kendall's tau and Spearman's correlation are negative. Consequently, the sufficient conditions do not imply PQD or PRD. 

Similar to PRD, the open- and closed-tail DTD are not symmetric. This should not be surprising. Symmetric positive dependence typically means that the variables tend to be large or small together, corresponding to probability mass concentrated around the main diagonal of the joint distribution table. In contrast, large values of the estimands $\eta$ and $\tau$ reflect situations in which $Y(1)$ tends to exceed $Y(0)$, which corresponds to probability mass concentrates to the left of the main diagonal.

We conclude the section with several numerical examples. 
\begin{itemize}
    
\item Recalling that both independence bounds fail for the joint distribution in Table \ref{tab:joint}, we conclude that neither open- nor closed-tail DTD holds. Indeed, 
\[
P\{Y(1)>1\mid Y(0)=1\}=0.79<0.90=P\{Y(1)>1\},
\]
and
\[
P\{Y(1)\ge 2\mid Y(0)=2\}=0.83<0.90=P\{Y(1)\ge2\}.
\]

\item Web Tables 1-3 provide examples in which both open- and closed-tail DTD hold, only closed-tail DTD holds, and only open-tail DTD holds, respectively.  They illustrate that the independence-based bounds may fail when the corresponding DTD condition is not satisfied. 




\end{itemize}

In summary, open-tail DTD guarantees the validity of the independence-based bound for $\eta$, while closed-tail DTD guarantees the validity of the independence-based bound for $\tau$. However, neither condition is necessary. The two notions may hold simultaneously, but it is possible that one  holds while the other does not. Finally, as with many assumptions in causal inference, these conditions are untestable from the data, but may be reasonable in particular applications.

\subsection{Local diagonal tail dominance}

While the DTD criteria guarantee the validity of the independence-based bounds, they are quite strong and users may hesitate to rely on them in practice. In fact, let $S(j|k)=P\{Y(1)\ge j\mid Y(0)=k\}$ and $p^0_k=P\{Y(0)=k\}$. Then
\begin{equation} \label{eq:P(j)}
P\{Y(1)\ge j\}=\sum_{k=1}^J S(j\mid k) p^0_k.
\end{equation}
The closed-tail DTD condition holds only when the weighted average (given in \eqref{eq:P(j)}) of $S(j\mid k)$ with weights $p^0_k$ is not larger than $S(j\mid j)$. 

If PRD holds, which is a reasonable assumption for an effective treatment, $S(j\mid k)$ is increasing in $k$ for all $j$. This means that there is typically  some cutoff $\ell=\ell(j)$ satisfying an inequality of the form:
\[
S(j\mid 1)<\cdots < S(j\mid \ell)< P\{Y(1)\ge j\}<S(j\mid \ell+1) <\cdots < S(j\mid J).
\]
(we intentionally use strict inequalities to make our point clearer.)
For small $j$, the cutoff $\ell$ may be larger than $j$ and $Y(1)$ will not satisfy closed-tail DTD on $Y(0)$.

However, the same reasoning also applies to large values of $j$, for which the cutoff $\ell$ is smaller than $j$, and closed-tail DTD holds locally at $j$. In that case improved bounds are possible.

\begin{prop} \label{prop:improved}
Let $D_{ot}$ and $D_{ct}$ denote the sets of integer values at which local open-tail and local closed-tail DTD hold, respectively. That is, $j\in D_{ot}$ implies $P\{Y(1)> j\}\le P\{Y(1)> j \mid Y(0)=j\}$, and  $j\in D_{ct}$ implies $P\{Y(1)\ge j\}\le P\{Y(1)\ge j \mid Y(0)=j\}$. Then an improved lower bound for $\eta$ is
\begin{equation}
        \tilde\eta_L = \max_{1\le j\le J}\Big\{\Delta_j + \sum_{\substack{k\ge j\\ k\in D_{ot}}} P\{Y(1)> k\}P\{Y(0)=k\}\Big\},
    \end{equation}
\end{prop}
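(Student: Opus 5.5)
We need to show that for each fixed $j$ the quantity inside the maximum is a lower bound for $\eta$; taking the maximum over $j$ then gives the claim. The plan is to decompose $\eta$ according to the value of $Y(0)$, as in the proof of Proposition~\ref{prop:DTD}:
\[
\eta=\sum_{k=1}^J P\{Y(1)>k,\,Y(0)=k\}.
\]
We then split this sum into the terms with $k<j$ and those with $k\ge j$. The $k\ge j$ part is treated cell by cell: terms with $k\in D_{ot}$ are bounded below by the local open-tail DTD inequality, and the remaining terms are dropped using nonnegativity. The $k<j$ part is bounded below by the model-free argument behind Proposition~\ref{prop:eta}.

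The key steps, in order, are as follows.

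\textbf{Step 1 (tail terms).} For $k\in D_{ot}$, multiply the defining inequality by $p^0_k$ to obtain
\[
P\{Y(1)>k,\,Y(0)=k\}\ge P\{Y(1)>k\}P\{Y(0)=k\}.
\]
For $k\ge j$ with $k\notin D_{ot}$, use $P\{Y(1)>k,\,Y(0)=k\}\ge 0$.

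\textbf{Step 2 (head terms).} Show that
\[
\sum_{k<j}P\{Y(1)>k,\,Y(0)=k\}\ge \Delta_j.
\]
Write $P\{Y(1)\ge j\}=P\{Y(1)\ge j,\,Y(0)<j\}+P\{Y(1)\ge j,\,Y(0)\ge j\}$. The second term is at most $P\{Y(0)\ge j\}$, so
\[
\Delta_j\le P\{Y(1)\ge j,\,Y(0)<j\}.
\]
Each pair with $Y(0)=k<j\le Y(1)$ satisfies $Y(1)>k$, so this event is contained in $\bigcup_{k<j}\{Y(0)=k,\,Y(1)>k\}$. This gives the claimed inequality.

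\textbf{Step 3 (combine).} Add the bounds from Steps 1 and 2 to get the candidate bound for each fixed $j$, then maximize over $j$.

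The main obstacle is Step 2, because we must make sure the model-free part uses only the rows $k<j$. If it also consumed cells in rows $k\ge j$, the tail terms would be double counted. The event-inclusion argument above keeps the two parts disjoint in $Y(0)$, which is exactly what is needed. In the colored-matrix picture of Figure~\ref{fig:bounds}, the candidate bound $\Delta_j$ counts green cells only in columns $Y(0)<j$, while the added independence terms occupy the strictly upper-triangular cells of columns $Y(0)=k\ge j$.
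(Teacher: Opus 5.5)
Your proof is correct and follows the paper's argument. You decompose $\eta=\sum_k P\{Y(1)>k,\,Y(0)=k\}$ by the value of $Y(0)$, bound the columns $Y(0)<j$ below by $\Delta_j$ via $\Delta_j\le P\{Y(1)\ge j,\,Y(0)<j\}$, and bound the strictly upper-triangular cells in columns $k\ge j$ by local open-tail DTD or by zero; this is the colored-matrix reasoning the paper points to, formalized as in its proof of Proposition~\ref{prop:DTD}. (Two small remarks: when $p^0_k=0$ the product form of the DTD inequality holds trivially, and nonnegativity of the added terms gives $\tilde\eta_L\ge\eta_L$, which justifies calling the bound ``improved.'')
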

and an improved lower bound for $\tau$ is 
\begin{equation}
        \tilde\tau_L = \max_{1\le j\le J}\Big\{P\{Y(0)=j\} + \Delta_j + \sum_{\substack{k>j\\ k\in D_{ct}}} P\{Y(1)\ge k\}P\{Y(0)=k\}\Big\}.
    \end{equation}
The reasoning behind these bounds is easily seen by inspecting the colored matrices; see the right panel of Figure \ref{fig:delta}, panels (a) and (b1) of Figure \ref{fig:bounds}, and Web Figure 1. A formal proof is given in Section C of the Web Appendix. 

Thus, while using the independence-based bounds by relying on the DTD criteria may be problematic in many applications, the much weaker and more plausible assumption of local DTD  can lead to tighter bounds than those obtained without any assumptions on the dependence structure. 

\section{Real data example} \label{sec:real_data}

We apply the different bounds to data from a clinical trial studying the benefit of a new treatment for acute ischemic stroke \citep{berkhemer2015randomized}. The trial included 500 patients with acute ischemic stroke caused by a proximal intracranial arterial occlusion,  randomized to either intraarterial treatment plus usual care or usual care alone. The outcome is the modified Rankin scale (mRS) score assessed 90 days after the stroke. This is an ordinal variable with seven levels measuring functional ability, where score 0 indicates no clinical symptoms and score 6 indicates death. Because of small counts, scores 0 (no symptoms) and 1 (no clinically significant disability) are combined. Finally, to remain consistent with the discussion above in which higher levels of the ordinal scale correspond to better outcomes, we define score 6 of the mRS as level 1 of our ordinal variable and scores 0-1 of the mRS as level 6. 

The original study used multivariate ordinal logistic regression to estimate an adjusted odds-ratio,  finding a significant benefit of intraarterial treatment, with an odds ratio of 1.67. 
We reanalyze the data, calculating model-free estimands and bounds (see Section \ref{sec:discussion} for a short discussion regarding inference).  Table \ref{tab:stroke} presents the data and the resulting empirical probabilities. The last row reports $\Delta_j$, which is the average causal effect obtained when the ordinal outcome is replaced by a binary indicator of the form ${\rm I}\{Y\ge j\}$. The new treatment is beneficial for all $j$, with the additive effect varying across levels. 

\begin{table}[tb]
    \centering 
    \begin{tabular}{lcccccc}
\toprule
mRS score &         0-1 &  2 &  3 & 4 & 5 &  6 \\
Ordinal level ($j$) & 6 & 5 & 4 & 3 & 2 & 1 \\
\midrule
\# trt   &  27 & 49 & 43 & 52 &  13 & 49 \\
\# ctrl  &   16 & 35 & 44 & 81 & 32 & 59 \\
$P(Y=j\mid {\rm trt})$ &  0.116 & 0.210 & 0.185 & 0.223 & 0.056 & 0.210 \\
$P(Y=j\mid {\rm ctrl})$ & 0.060 & 0.131 & 0.165 & 0.303 & 0.120 & 0.221 \\
$P(Y\ge j\mid {\rm trt})$ &  0.116 & 0.326 & 0.511 & 0.734 & 0.790 & 1  \\
$P(Y\ge j\mid {\rm ctrl})$ &  0.060 & 0.191 & 0.356 & 0.659 & 0.779 & 1 \\
$\Delta_j$  &  0.056 & 0.135 & 0.155 & 0.075 & 0.011 & 0 \\
\bottomrule
\end{tabular}
\caption{Data and statistics from the stroke clinical trail. First two rows show the mRS score and corresponding level for the analysis. Rows 3 and 4 are the raw counts for the treatment (trt) and control (ctrl) groups.  Rows 5-9 provide statistics for calculating the bounds.}
\label{tab:stroke}
\end{table}

Using the model-free sharp bounds in Propositions \ref{prop:tau} and \ref{prop:eta}, we find that the proportion of individuals for whom  the treatment is beneficial, $\eta$, lies between 0.155 and 0.790, and the proportion for whom the treatment is non-harmful, $\tau$, lies between 0.378 and 1. These intervals are quite wide, making the bounds only weakly informative. Under a working independence assumption, the estimands in \eqref{eq:ind}  are 0.486 and 0.672. Using these as lower bounds provides a much sharper conclusion. They imply that the proportion of individuals who benefit from the new treatment is at least about 50\% and that the proportion   who are not harmed is substantially greater than 50\%. According to Proposition \ref{prop:DTD}, these lower bounds are valid under the untestable assumption of open- and closed-tail DTD. In the present example, this assumption means that knowing that a person's outcome under usual care is $j$ would increase the chance of having at least as good an outcome (closed-tail) or a strictly better outcome (open-tail) when having the option of intraarterial treatment, rather than usual care alone.

However, as discussed above, this assumption is problematic for small $j$, say for $j=3$, which is the modal score under both treatment and control. It seems safer to assume local DTD for large values, say 4 and above. Doing so and using the bounds in Proposition \ref{prop:improved}, we obtain the bounds $\tilde \eta=0.224$ and $\tilde \tau=0.512$. These improve on the sharp bounds that make no assumptions about the dependence structure.

\section{Discussion} \label{sec:discussion}

Causal inference for ordinal outcomes is challenging, and no fully satisfactory general approach has yet gained consensus. Regression models, such as the proportional-odds model, yield estimates that are not easily interpreted in the potential-outcomes framework because of the noncollapsibility of the odds ratio \citep{whitcomb2021defining}. Other approaches modify the nature of the outcome by binarizing it or by treating it as continuous through the assignment of numerical scores; such approaches either lose information or impose arbitrary assumptions. In contrast, the approach considered here defines estimands based on the joint distribution of the potential outcomes. Although these estimands are generally not identifiable, they admit identifiable bounds. Without additional assumptions, these bounds may be loose, but they can be substantially tightened under dependence assumptions. We show that standard “weak” notions of dependence are insufficient to guarantee the validity of independence-based bounds\cite{lu2018treatment}, and we introduce a new stronger dependence criterion that is sufficient.

Colored matrices provide a simple and effective tool for visualizing causal estimands for ordinal outcomes and for formulating dependence notions that guarantee the validity of improved bounds.
The technique and the associated bounds extend naturally to discrete numerical variables and, with suitable modifications, to continuous variables by replacing the matrices with corresponding regions in the $(X,Y)$ plane (note that the ordering of values in the matrices is consistent with the standard orientation of the coordinate axes). Fan and Park\cite{fan2010sharp} derive bounds for the continuous case, in which $\eta=\tau$. Their results are equivalent to those presented in Propositions \ref{prop:tau} and \ref{prop:eta} after omitting the marginal probabilities $P\{Y(0)=j\}$ and $P\{Y(1)=j\}$, which vanish in the continuous setting.

The notions of diagonal tail dominance and local diagnoal tail dominance can be extended to the continuous case with appropriate modifications. In this setting, there is no distinction between open- and closed-tail versions. We say that $Y(1)$ satisfies diagonal tail dominance (DTD) on $Y(0)$ if there exists a version of the conditional distribution of $Y(1)$ given $Y(0)$ such that 
\[
P\{Y(1)>y\mid Y(0)=y\} \ge P\{Y(1)>y\}
\]
for almost all $y$. Under this definition, and the corresponding definition of a local DTD, Propositions \ref{prop:DTD} and \ref{prop:improved} extend to the continuous case.

In the presence of confounders, the bounds can be improved\cite{lu2018treatment} by calculating covariate-specific bounds and taking expectations with respect to the distribution of the confounders, denoted by $Z$. Specifically,
\[
\tilde\eta_L=\int \eta_L^z \, dF_Z(z),\quad \tilde\eta_U=\int \eta_U^z \, dF_Z(z), \quad 
\tilde\tau_L=\int \tau_L^z \, dF_Z(z), \quad \tilde\tau_U=\int \tau_U^z \, dF_Z(z),
\]
where $F_Z$ denotes the distribution of the confounders, and the superscript $z$ indicates that the bounds are computed with respect to the conditional distribution of $(Y(0),Y(1))\mid Z=z$. The lower bounds remain valid if (local) open- and closed-tail DTD hold conditionally on the confounders; that is, if these conditions hold within each subpopulation defined by $Z=z$.

In practice, as illustrated in Section \ref{sec:real_data}, the marginal distributions are not known and must be inferred from the data. The bounds are therefore estimates and are subject to sampling uncertainty. Since they are functions of the marginal empirical distributions, standard methods can be used to estimate their sampling variance and to construct confidence intervals; see Lu et al.\cite{lu2018treatment}~for discussion and examples.

Finally, we emphasize that both PRD and DTD inherently assume that the treatment is more favorable than the control, and therefore it is not suitable for comparing a new treatment to a standard one. In such settings, one should first test for equality between the treatment arms. Only if this hypothesis is rejected the local or global DTD assumption should be considered as a justification for using improved bounds on the causal parameters.

\section*{Acknowledgments}
We thank Daniel Nevo and Yosef Rinott for helpful comments and suggestions. 


\bibliographystyle{unsrtnat}
\bibliography{references}


\section*{Supporting information}
Additional supporting information can be found  in the Supporting Information section.
R functions that reproduce all figures and tables presented in this paper is available at \url{https://github.com/MichaMandel/Ordinal-bounds}. 

\appendix
\counterwithin{table}{section}
\counterwithin{figure}{section}

\section{Supporting Information -- Additional figures and tables}

\begin{figure}[p]
\centering
\includegraphics[width=\textwidth,height=0.85\textheight,keepaspectratio]{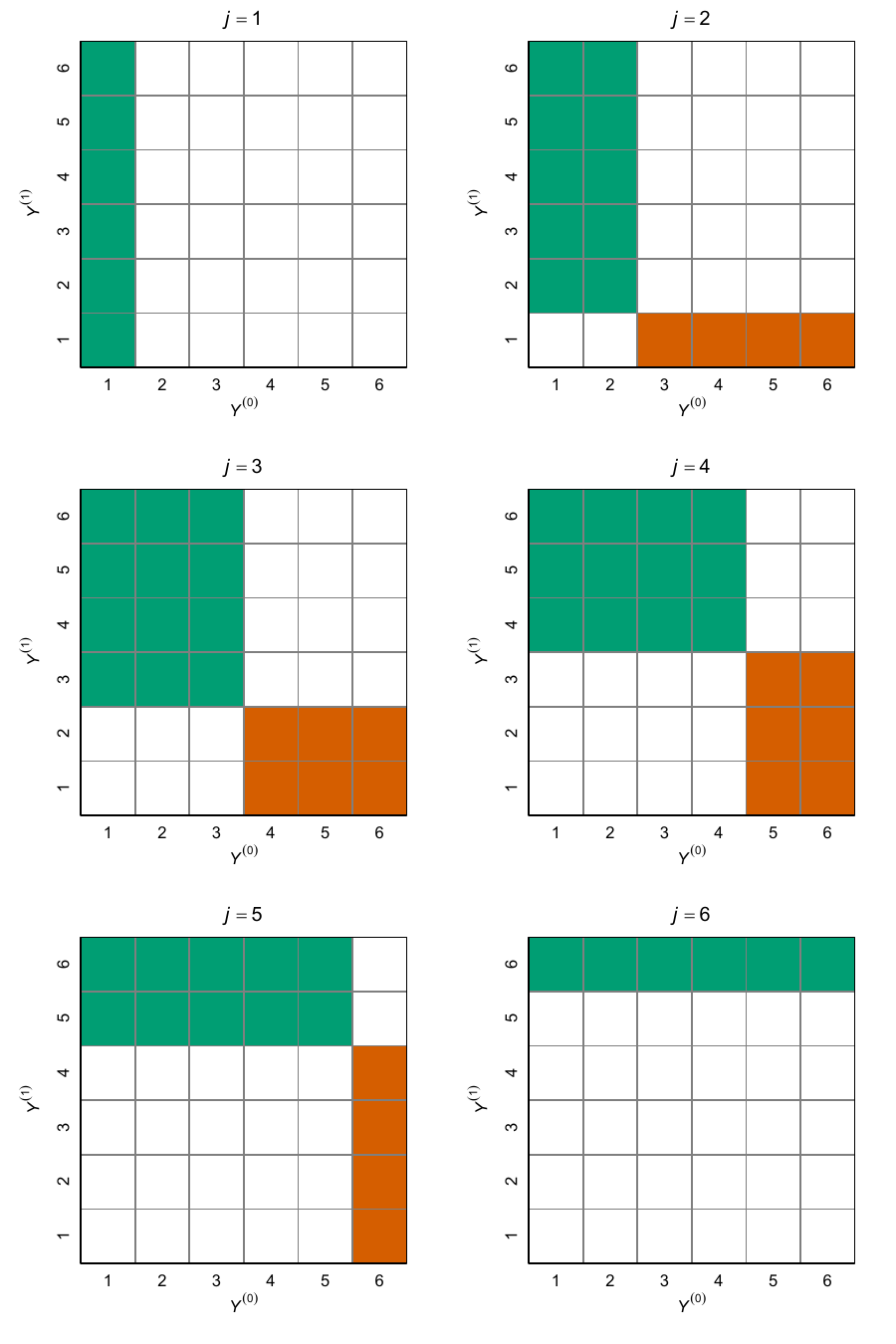}
\caption{Lower bounds for $\tau$. The matrix represents the joint probability table of $(Y(0),Y(1))$. The bounds sum the green entries and subtract the red ones. }
\label{fig:lower_tau_all}
\end{figure}

\begin{figure}[p]
\centering
\includegraphics[width=\textwidth,height=0.85\textheight,keepaspectratio]{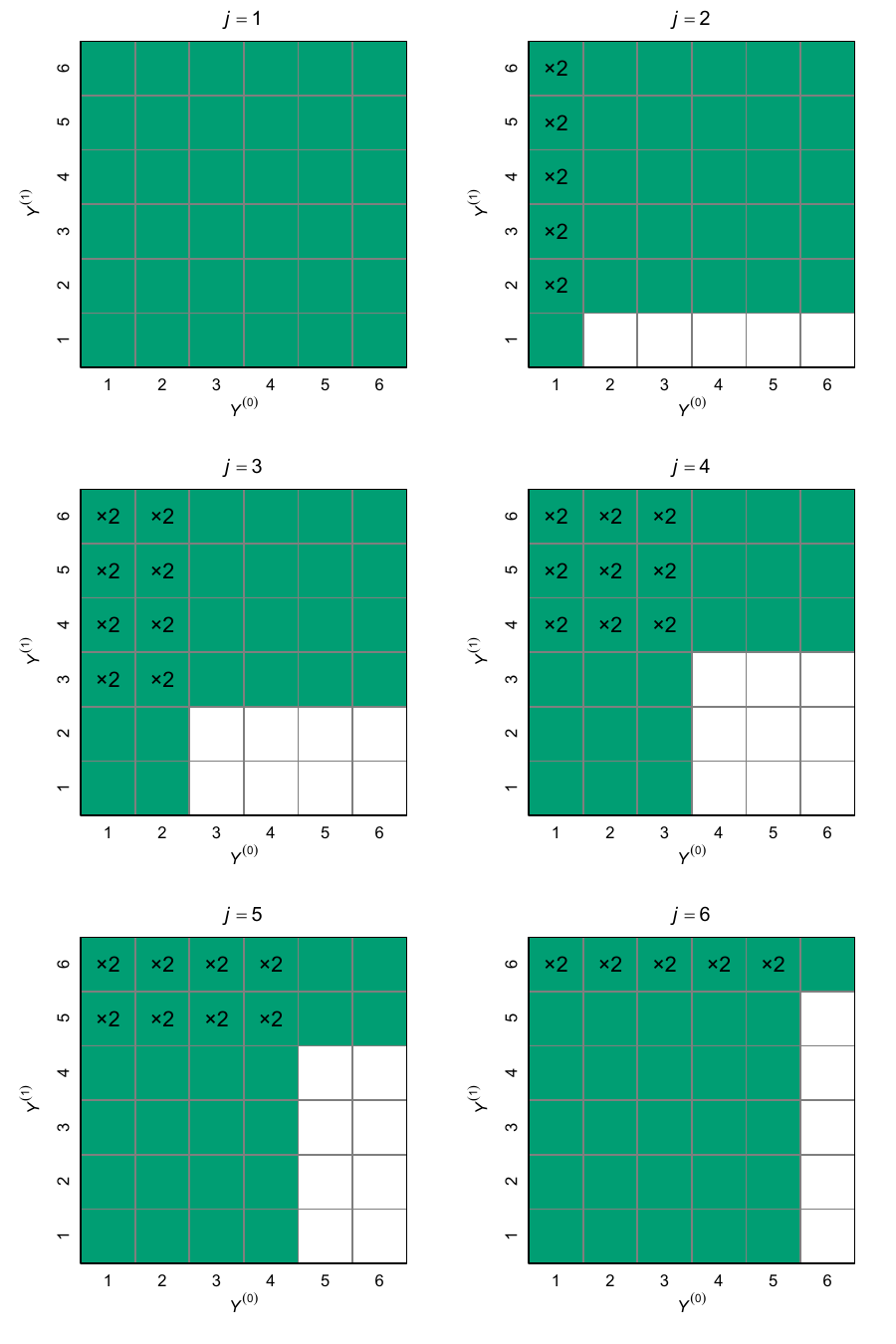}
\caption{Upper bounds for $\tau$. The matrix represents the joint probability table of $(Y(0),Y(1))$. The bounds sum the green entries; entries with $\times 2$ are summed twice.}
\label{fig:uper_tau_all}
\end{figure}

\clearpage

\begin{table}[ht]
\centering
\begin{tabular}{c c ccc @{\hspace{1cm}} ccc c}
\toprule
 &  & \multicolumn{3}{c}{joint} & \multicolumn{3}{c}{independence} & \\
\cmidrule(lr){3-5} \cmidrule(lr){6-8}
 &  &  & {$Y(0)$} &  &  & {$Y(0)$} &  & \\
 &  & 1 & 2 & 3 & 1 & 2 & 3 & $P\{Y(1)=y\}$ \\
\midrule
 & 3 & 0.13 & 0.13 & 0.09 & 0.14 & 0.119 & 0.091 & 0.35 \\
$Y(1)$ & 2 & 0.11 & 0.1 & 0.05 & 0.104 & 0.0884 & 0.0676 & 0.26 \\
 & 1 & 0.16 & 0.11 & 0.12 & 0.156 & 0.1326 & 0.1014 & 0.39 \\
\midrule
 & $P\{Y(0)=y\}$ & 0.4 & 0.34 & 0.26 & 0.4 & 0.34 & 0.26 & 1 \\
\bottomrule
\end{tabular}
\caption{Joint distribution of the potential outcomes illustrating a case where closed and open-DTD hold, together with the corresponding distribution under the working independence assumption.}
\label{tab:DTDhold}
\end{table}

\begin{table}[ht]
\centering
\begin{tabular}{c c ccc @{\hspace{1cm}} ccc c}
\toprule
 &  & \multicolumn{3}{c}{joint} & \multicolumn{3}{c}{independence} & \\
\cmidrule(lr){3-5} \cmidrule(lr){6-8}
 &  &  & {$Y(0)$} &  &  & {$Y(0)$} &  & \\
 &  & 1 & 2 & 3 & 1 & 2 & 3 & $P\{Y(1)=y\}$ \\
\midrule
 & 3 & 0.07 & 0.18 & 0.1 & 0.1225 & 0.161 & 0.0665 & 0.35 \\
$Y(1)$ & 2 & 0.08 & 0.22 & 0.05 & 0.1225 & 0.161 & 0.0665 & 0.35 \\
 & 1 & 0.2 & 0.06 & 0.04 & 0.105 & 0.138 & 0.057 & 0.3 \\
\midrule
 & $P\{Y(0)=y\}$ & 0.35 & 0.46 & 0.19 & 0.35 & 0.46 & 0.19 & 1 \\
\bottomrule
\end{tabular}
\caption{Joint distribution of the potential outcomes illustrating a case where closed but not open-DTD hold and $\eta<\eta_I$, together with the corresponding distribution under the working independence assumption.}
\label{tab:eta<eta_I}
\end{table}

\begin{table}[ht]
\centering
\begin{tabular}{c c ccc @{\hspace{1cm}} ccc c}
\toprule
 &  & \multicolumn{3}{c}{joint} & \multicolumn{3}{c}{independence} & \\
\cmidrule(lr){3-5} \cmidrule(lr){6-8}
 &  &  & {$Y(0)$} &  &  & {$Y(0)$} &  & \\
 &  & 1 & 2 & 3 & 1 & 2 & 3 & $P\{Y(1)=y\}$ \\
\midrule
 & 3 & 0.2 & 0.1 & 0.05 & 0.1925 & 0.0875 & 0.07 & 0.35 \\
$Y(1)$ & 2 & 0.3 & 0.05 & 0.05 & 0.22 & 0.1 & 0.08 & 0.4 \\
 & 1 & 0.05 & 0.1 & 0.1 & 0.1375 & 0.0625 & 0.05 & 0.25 \\
\midrule
 & $P\{Y(0)=y\}$ & 0.55 & 0.25 & 0.2 & 0.55 & 0.25 & 0.2 & 1 \\
\bottomrule
\end{tabular}
\caption{Joint distribution of the potential outcomes illustrating a case where open but not closed-DTD hold and $\eta<\eta_I$, together with the corresponding distribution under the working independence assumption.}
\label{tab:tau<tau_I}
\end{table}

\end{document}